\newtheorem{theorem}{Theorem}[section]
\newtheorem{lemma}[theorem]{Lemma}
\newtheorem{corollary}[theorem]{Corollary}
\theoremstyle{remark}\newtheorem{remark}[theorem]{Remark}
\theoremstyle{remark}\newtheorem{example}{Example}[section]
\newcounter{ticklistc}
\newcommand{\Z}{\mathbb Z}
\newcommand{\C}{\mathbb C}
\newcommand{\R}{\mathbb R}
\newcommand{\K}{\mathcal K}
\newcommand{\D}{\mathcal D}
\newcommand{\SI}{\Sigma}
\newcommand{\G}{\Gamma}
\newcommand{\A}{\mathit{Arf}}
\newcommand{\eps}{\epsilon}
\newcommand{\sgn}{\mbox{sgn}}
\newcommand{\e}{\varepsilon}
\newcommand{\Pf}{\mbox{Pf}}
\begin{document}

\title{The geometry of dimer models}

\author{David Cimasoni}

\begin{abstract}
This is an expanded version of a three-hour minicourse given at the winterschool {\em Winterbraids IV} held in Dijon in February 2014.
The aim of these lectures was to present some aspects of the dimer model to a geometrically minded audience.
We spoke neither of braids nor of knots, but tried to show how several geometrical tools that we know and love (e.g. (co)homology, spin structures, real
algebraic curves) can be applied to very natural problems in combinatorics and statistical physics.
These lecture notes do not contain any new results, but give a (relatively original) account of the works of Kasteleyn~\cite{Ka3},
Cimasoni-Reshetikhin~\cite{C-RI} and Kenyon-Okounkov-Sheffield~\cite{KOS}.

\end{abstract}

\maketitle

\tableofcontents


\section*{Foreword}

These lecture notes were originally not intended to be published, and the lectures were definitely not prepared with this aim in mind.
In particular, I would like to stress the fact that they do not contain any new results, but only an exposition of well-known results in the field.
Also, I do not claim this treatement of {\em the geometry of dimer models\/} to be complete in any way.
The reader should rather take these notes as a personal account by the author of some selected chapters where the words {\em geometry\/} and {\em dimer models\/}
are not completely irrelevant, chapters chosen and organized in order for the resulting story to be almost self-contained, to have a natural beginning, and a happy ending.


\section{Introduction}

Let $\Gamma$ be a finite connected graph, with vertex set $V(\G)$ and edge set $E(\G)$. A {\em perfect matching\/} on $\G$ is a choice of edges of $\G$ such that each vertex of
$\G$ is adjacent to exactly one of these edges. In statistical mechanics, a perfect matching on $\G$ is also known as a {\em dimer configuration\/} on $\G$, and
the edges of the perfect matching are called {\em dimers\/}. We shall denote by $\D(\G)$ the set of dimer configurations on $\G$. An example
is given in Figure~\ref{ex-dimer}.

The first natural question to address is whether a given finite graph $\Gamma$ admits a perfect matching at all.
Clearly, $\G$ needs to have an even number of vertices, but this condition is not sufficient: the $\Ydown$-shaped graph has 4 vertices but no perfect matching.
There is an efficient algorithm to answer this question: it is the so-called {\em Hungarian method\/} in the case of bipartite graphs~\cite{L-P},
that was extended by Edmonds~\cite{Edm} to the general case.
We refer to these sources for purely combinatorial aspects of matching theory.
Unless otherwise stated, we will only
consider finite graphs which admit perfect matchings. In
particular, all the graphs will have an even number of vertices.

\begin{figure}[Htb]
\centerline{\psfig{figure=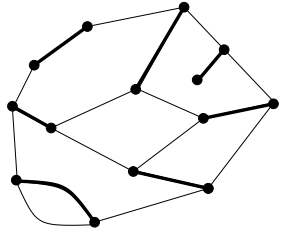,height=3cm}}
\caption{A dimer configuration on a graph.}
\label{ex-dimer}
\end{figure}

The second question is: given a fixed finite graph $\G$, can one count the number $\#\D(\G)$ of perfect matchings on $\G$~? The answer to this question turns out to be
much more subtle. As proved by Valient~\cite{Val}, there is no efficient algorithm to compute this number for an arbitrary finite graph.
However, in the case of planar graphs (and more generally, of graphs of a fixed genus), there is a polynomial time algorithm to compute the number of perfect matchings.
This will be the subject of a good part of these notes.

In fact, it is possible to compute not only the number of dimer configurations, but also the weighted sum of these configurations, that now we define.
An {\em edge-weight system\/} on $\G$ is a positive real-valued function $\nu$ on the set of edges of $\G$. Such a weight system $\nu$ defines a probability measure on the
set of dimer configurations:
\[
\mathbb{P}(D)=\frac{\nu(D)}{Z(\G;\nu)},
\]
where $\nu(D)=\prod_{e\in D}\nu(e)$ and
\[
Z(\G;\nu)=\sum_{D\in\D(\G)}\nu(D).
\]
The normalization constant $Z(\G;\nu)$ is called the {\em partition function\/} for the {\em dimer model\/} on the graph $\G$ with weight system $\nu$.
Note that if the weight system is constant equal to $1$, then $Z(\G;\nu)$ is nothing but the number $\#\D(\G)$ of dimer configurations on $\G$.


\section{Dimers and Pfaffians}

Recall that the determinant of a skew-symmetric matrix $A=(a_{ij})$ of size $2n$ is the square of a polynomial in the $a_{ij}$'s.
This square root, called the {\em Pfaffian\/} of $A$, is given by
\[
\Pf(A)=\frac{1}{2^n n!}\sum_{\sigma\in S_{2n}}\sgn(\sigma) a_{\sigma(1)\sigma(2)}\cdots a_{\sigma(2n-1)\sigma(2n)},
\]
where the sum is over all permutations of $\{1,\dots,2n\}$ and $\sgn(\sigma)\in\{\pm 1\}$ denotes the signature of $\sigma$. Since $A$ is skew-symmetric, the monomial corresponding to
$\sigma\in S_{2n}$ only depends on the matching of $\{1,\dots,2n\}$ into unordered pairs $\{\{\sigma(1),\sigma(2)\},\dots,\{\sigma(2n-1),\sigma(2n)\}\}$.
As there are $2^n n!$ different permutations defining the same matching, we get
\[
\Pf(A)=\sum_{[\sigma]\in\Pi}\sgn(\sigma) a_{\sigma(1)\sigma(2)}\cdots a_{\sigma(2n-1)\sigma(2n)},
\]
where the sum is on the set $\Pi$ of matchings of $\{1,\dots,2n\}$. Note that there exists a skew-symmetric version of the Gauss elimination algorithm, based on the equality
$\Pf(BAB^T)=\det(B)\Pf(A)$. This allows us to compute the Pfaffian of a matrix of size $2n$ in $\mathcal{O}(n^3)$ time.

\medskip

The celebrated {\em FKT algorithm\/} -- named after Fisher, Kasteleyn and Temperley~\cite{Fi1,Ka1,F-T} --
allows us to express the dimer partition function of a certain class of graphs as a Pfaffian. It is based on the following beautifully simple computation.

Enumerate the vertices of $\G$
by $1,2,\dots,2n$, and fix an arbitrary orientation $K$ of the edges of $\G$. Let $A^K=(a_{ij}^K)$ denote the associated
weighted skew-adjacency matrix; this is the $2n\times 2n$ skew-symmetric matrix whose coefficients
are given by
\[
a^K_{ij}=\sum_{e}\e_{ij}^K(e)\nu(e),
\]
where the sum is on all edges $e$ in $\Gamma$ between the vertices $i$ and $j$, and
\[
\e^K_{ij}(e)=
\begin{cases}
\phantom{-}1 & \text{if $e$ is oriented by $K$ from $i$ to $j$;} \\
-1 & \text{otherwise.}
\end{cases}
\]
An example is given in Figure~\ref{fig:baby1}.

\begin{figure}[Htb]
\labellist\small\hair 2.5pt
\pinlabel {$1$} at 5 100
\pinlabel {$2$} at 113 100
\pinlabel {$3$} at 230 152 
\pinlabel {$4$} at 230 5
\pinlabel {$\nu_1$} at 58 120
\pinlabel {$\nu_2$} at 58 33
\pinlabel {$\nu_3$} at 150 128
\pinlabel {$\nu_5$} at 150 26
\pinlabel {$\nu_4$} at 195 80
\pinlabel {$\begin{pmatrix}0&\nu_1+\nu_2&0&0\cr -\nu_1-\nu_2&0&-\nu_3&\nu_5\cr 0&\nu_3&0&\nu_4\cr 0&-\nu_5&-\nu_4&0\end{pmatrix}$} at 390 80
\endlabellist
\centerline{\psfig{file=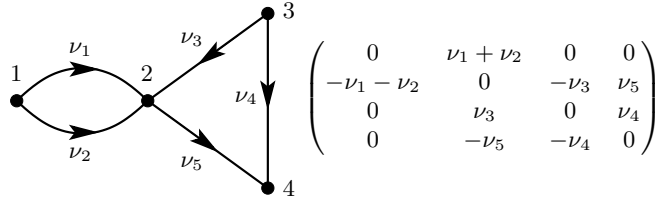,height=2.5cm}}
\caption{An oriented weighted graph, and the corresponding weighted skew-adjacency matrix.}
\label{fig:baby1}
\end{figure}

Now, let us consider the Pfaffian of this matrix. A matching of $\{1,\dots,2n\}$ contributes to $\Pf(A^K)$ if and only if it is realized by a dimer configuration
on $\G$, and this contribution is $\pm \nu(D)$. More precisely,
\begin{equation}
\label{equ:Pf}
\Pf(A^K)=\sum_{D\in\D(\G)}\e^K(D)\nu(D),
\end{equation}
where the sign $\e^K(D)$ can be computed as follows: if the dimer configuration $D$ is given by edges $e_1,\dots,e_n$ matching vertices $i_\ell$ and $j_\ell$ for $\ell=1,\dots,n$,
let $\sigma$ denote the permutation sending
$(1,\dots, 2n)$ to $(i_1,j_1,\dots,i_n,j_n)$; the sign is given by
\[
\e^K(D)=\sgn(\sigma)\prod_{\ell=1}^n\e^K_{i_\ell j_\ell}(e_\ell).
\]
The problem of expressing $Z(\G;\nu)$ as a Pfaffian now boils down to finding an orientation $K$ of the edges of $\G$
such that $\e^K(D)$ does not depend on $D$.

Let us fix $D,D'\in\D(\G)$, and try to compute the product $\e^K(D)\e^K(D')$. Note that the symmetric difference $D\Delta D'$ consists of a
disjoint union of simple closed curves $C_1,\dots,C_m$ of even length in $\G$.
Since the matchings $D$ and $D'$ alternate along these cycles, one can choose permutations $\sigma$ (resp. $\sigma'$) representing $D$ (resp. $D'$) such that $\tau:=\sigma'\circ\sigma^{-1}\in S_{2n}$
is the product of the cyclic permutations defined by the cycles $C_1,\dots,C_m$. Using this particular choice of representatives, and the fact that $\sgn(\tau)=(-1)^m$, we find
\begin{equation}
\label{equ:cc}
\e^K(D)\e^K(D')=\prod_{i=1}^m(-1)^{n^K(C_i)+1},
\end{equation}
where $n^K(C_i)$ denotes the number of edges of $C_i$ where a fixed orientation of $C_i$ differs from $K$.
Since $C_i$ has even length, the parity of this number is independent of the chosen orientation of $C_i$.

Therefore, we are now left with the problem of finding an orientation $K$ of $\G$ with the following propery: for any cycle $C$ of even
length such that $\G\setminus C$ admits a dimer configuration, $n^K(C)$ is odd. Such an orientation was called
{\em admissible\/} by Kasteleyn~\cite{Ka3}; nowadays, the term of {\em Pfaffian orientation\/} is commonly used.
By the discussion above, if $K$ is a Pfaffian orientation, then $Z(\G;\nu)=|\Pf(A^K)|=\sqrt{|\det(A^K)|}$.


\section{Kasteleyn's theorem}

Kasteleyn's celebrated theorem asserts that every planar graph admits a Pfaffian orientation. More precisely, let
$\G$ be a graph embedded in the plane. Each face $f$ of $\G\subset\R^2$ inherits the (say, counterclockwise)
orientation of $\R^2$, so $\partial f$ can be oriented as the boundary of the oriented face $f$.

\begin{theorem}[Kasteleyn~\cite{Ka2,Ka3}]
\label{thm:Kast}
Given $\G\subset\R^2$, there exists an orientation $K$ of $\G$ such that, for each face $f$
of $\G\subset\R^2$, $n^K(\partial f)$ is odd. Furthermore, such an orientation is Pfaffian, so $Z(\G;\nu)=|\Pf(A^K)|$.
\end{theorem}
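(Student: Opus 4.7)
The plan is to prove the two assertions of the theorem in turn: first construct an orientation $K$ making $n^K(\partial f)$ odd on every bounded face, then verify that any such $K$ is Pfaffian in the sense of the previous section (the ``furthermore'' clause is already contained in the discussion preceding the statement).

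For the existence of $K$, I would argue by induction on the number of bounded faces of $\G\subset\R^2$. If there are none, $\G$ is a forest and any orientation works vacuously. Otherwise $\G$ contains a cycle and hence a non-bridge edge on the boundary of the unbounded face $f_\infty$; any such edge $e$ is shared with some bounded face $f$. Removing $e$ merges $f$ with $f_\infty$, decreasing the number of bounded faces by one, and by induction the edges of $\G\setminus e$ can be oriented so that every remaining bounded face has odd $n^K$-parity. Restoring $e$ in whichever of its two possible directions makes $n^K(\partial f)$ odd then produces the desired orientation. A short count using Euler's formula together with the evenness of the vertex set shows that the parity on $f_\infty$ is automatically odd as well.

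To verify that such $K$ is Pfaffian, I would fix a simple even cycle $C$ with $\G\setminus C$ admitting a dimer configuration and show that $n^K(C)$ is odd. Let $D_C$ denote the closed disk bounded by $C$, oriented counterclockwise, and let $V(C)$, $E(C)$ and $F(C)$ denote the sets of vertices, edges and bounded faces of $\G$ strictly inside $D_C$. Each edge in $E(C)$ lies on exactly two faces of $F(C)$ and is given opposite orientations by the corresponding counterclockwise boundaries, so it contributes $1\pmod 2$ to $\sum_{f\in F(C)}n^K(\partial f)$; each edge of $C$ itself lies on a unique face of $F(C)$ and contributes its $n^K$ value relative to the counterclockwise orientation of $C$. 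Hence
\[
|F(C)|\equiv\sum_{f\in F(C)}n^K(\partial f)\equiv |E(C)|+n^K(C)\pmod 2.
\]
Euler's formula for the planar subgraph drawn in $D_C$ gives $|V(C)|-|E(C)|+|F(C)|=1$, from which
\[
n^K(C)\equiv |V(C)|+1\pmod 2.
\]
Finally, planarity forbids edges of $\G\setminus C$ from crossing $C$, so its vertex set splits into interior and exterior pieces with no edges between them; the existence of a perfect matching on $\G\setminus C$ therefore forces $|V(C)|$ to be even, so $n^K(C)$ is odd.

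The main obstacle is the topological bookkeeping in the inductive construction, specifically justifying the existence of a non-bridge edge on $\partial f_\infty$ and controlling how faces merge when edges are removed (the non-$2$-connected case, with bridges and cut vertices, requires a small extra argument). By contrast, the Pfaffian verification is essentially a clean combination of Euler's formula inside $D_C$ with the parity information supplied by the dimer configuration on $\G\setminus C$, which is used only at the very last step to ensure $|V(C)|$ is even.
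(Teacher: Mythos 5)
Your proposal is correct and follows the same overall architecture as the paper (construct a ``Kasteleyn'' orientation, prove that $n^K(C)\equiv\#\{\text{enclosed vertices}\}+1\pmod 2$ for any counterclockwise simple cycle $C$, then use the perfect matching on $\G\setminus C$ and planarity to force the enclosed vertex count to be even), but the middle step is carried out differently. The paper proves the parity lemma by induction on the number of faces enclosed by $C$, peeling off one face at a time whose closure meets $C$ in an interval --- and has to wave at a degenerate case where no such face exists. You instead sum the Kasteleyn condition over all faces inside the disk at once: interior edges cancel in pairs (or contribute $1\bmod 2$ when a bridge bounds the same face twice), boundary edges reproduce $n^K(C)$, and Euler's formula converts $|F(C)|-|E(C)|$ into $1-|V(C)|$. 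This global double-count is cleaner and sidesteps the degenerate case entirely; its only hidden hypothesis is that the subgraph of $\G$ inside the closed disk is connected (so that Euler gives $|V|-|E|+|F|=1$ for the bounded faces), which follows from connectedness of $\G$ since any path from an interior vertex to the exterior must pass through a vertex of $C$. For existence, your induction on bounded faces (delete a non-bridge edge of $\partial f_\infty$, orient the rest, restore the edge with the correct orientation) is essentially the paper's dual-spanning-tree construction read in a different order; the existence of a non-bridge edge on $\partial f_\infty$ whenever a bounded face exists is the one point you rightly flag as needing a short justification, at about the same level of hand-waving as the paper's own treatment of its degenerate case.
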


A striking consequence of this theorem is that it enables to compute the dimer partition function for planar graphs in polynomial time.

Following Kasteleyn, we shall break the proof into three lemmas. Also, we shall say that an orientation $K$ satisfies the {\em Kasteleyn condition\/}
around a face $f$ if $n^K(\partial f)$ is odd. An orientation satisfying the Kasteleyn condition around each face is a {\em Kasteleyn orientation\/}, and the corresponding
skew-adjacency matrix a {\em Kasteleyn matrix\/}.

\begin{lemma}
\label{lemma:A}
There exists a Kasteleyn orientation on any planar graph $\G\subset\R^2$.
\end{lemma}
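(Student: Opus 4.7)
The plan is to prove Lemma \ref{lemma:A} by induction on the number of bounded faces of $\Gamma\subset\R^2$, complemented by a short parity argument for the outer face at the end. In the base case $\Gamma$ has no bounded faces, so it is a tree and the Kasteleyn condition on bounded faces is vacuous.

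For the inductive step, suppose $\Gamma$ has at least one bounded face, and hence a cycle. I pick an edge $e$ lying on a cycle: then $e$ is not a bridge, so $\Gamma\setminus e$ is still connected and planar. Removing $e$ merges the two faces $f_1,f_2$ of $\Gamma$ adjacent to $e$ into a single face $f$ of $\Gamma\setminus e$, so the number of bounded faces strictly decreases whether one or both of $f_1,f_2$ are bounded. By the inductive hypothesis, $\Gamma\setminus e$ carries an orientation $K'$ with $n^{K'}(\partial f')$ odd for every bounded face $f'$ of $\Gamma\setminus e$. I extend $K'$ to $K$ on $\Gamma$ by orienting $e$; every bounded face of $\Gamma$ disjoint from $e$ inherits its condition from $K'$. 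For $f_1$ and $f_2$, the two face-induced orientations of $e$ are opposite, so $e$ contributes $1$ in total to $n^K(\partial f_1)+n^K(\partial f_2)$, giving the identity
\[
n^K(\partial f_1)+n^K(\partial f_2)\,=\,n^{K'}(\partial f)+1.
\]
When both $f_1,f_2$ are bounded the merged face $f$ is bounded too, so $n^{K'}(\partial f)$ is odd by induction, hence the two terms on the left have the same parity; reversing the orientation of $e$ flips each of them by $1$, so a suitable choice makes both odd. When only one of $f_1,f_2$ is bounded we only need one parity to work out, which is even easier.

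It remains to handle the outer face $f_\infty$ of $\Gamma$. A global count shows that each edge contributes exactly $1$ to $\sum_f n^K(\partial f)$: for a non-bridge the two face-induced orientations are opposite, and a bridge is traversed once in each direction along $\partial f_\infty$. Hence $\sum_f n^K(\partial f)=|E(\Gamma)|$, and Euler's formula $V-E+F=2$ together with the standing assumption that $|V|$ is even gives $|E|\equiv F\pmod 2$. Subtracting the $F-1$ odd contributions from the bounded faces yields $n^K(\partial f_\infty)\equiv 1\pmod 2$ automatically. The main obstacle is the inductive step itself: one must pick $e$ on a cycle (rather than arbitrarily) to keep $\Gamma\setminus e$ connected, and then use the parity identity above to see that the single degree of freedom provided by the orientation of $e$ suffices to restore the Kasteleyn condition on both adjacent faces at once.
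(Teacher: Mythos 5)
Your proof is correct, and it takes a genuinely (if mildly) different route from the paper's. The paper fixes a spanning tree $T$ of the dual graph rooted at the outer face, orients the edges of $\G$ not crossed by $T$ arbitrarily, and then orients the remaining edges one at a time from the leaves of $T$ down to the root; the point of that ordering is that each newly completed face has exactly one still-unoriented boundary edge, so each step fixes the Kasteleyn condition for exactly one face and never disturbs a face already treated. Your deletion induction instead removes a non-bridge edge $e$, which merges the two adjacent faces, so the single bit carried by the orientation of $e$ must repair \emph{both} of them at once; this is where your identity $n^K(\partial f_1)+n^K(\partial f_2)=n^{K'}(\partial f)+1$ earns its keep, and it does work: writing $a$ and $b$ for the disagreement counts of $K'$ along $\partial f_1\setminus e$ and $\partial f_2\setminus e$, oddness of $n^{K'}(\partial f)=a+b$ means exactly one of $a,b$ is odd, and the two orientations of $e$ yield $(a,b+1)$ and $(a+1,b)$, one of which is odd in both coordinates. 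The two arguments are cousins — your induction is essentially what the paper's algorithm becomes if one always deletes the edge dual to a leaf of $T$ — but yours replaces the global dual-tree bookkeeping by a local two-face parity count, at the cost of having to insist that $e$ lies on a cycle so that connectivity and the face-merging picture survive. Your closing Euler-characteristic computation is a bonus: in the paper's convention the lemma only concerns bounded faces, so that step is not needed for the statement, but it correctly proves the remark the paper makes right after the lemma (the outer face also satisfies the Kasteleyn condition precisely when $|V(\G)|$ is even), which the paper defers to a later section.
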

\begin{proof}
Fix a spanning tree $T$ in $\G^*\subset S^2$, the graph dual to the graph $\G\subset S^2$, rooted at the vertex corresponding to the outer face of $\G\subset S^2$.
Orient arbitrarily the edges of $\G$ which do not intersect $T$. Moving down the tree from the leaves to the root, remove the edges of $T$ one by one, and orient the corresponding
edge of $\G$ in the unique way such that the Kasteleyn condition is satisfied around the newly oriented face.
When all the edges of $T$ are removed, we are left with an orientation $K$ of $\G$ satisfying the Kasteleyn condition around each face of $\G\subset\R^2$.
This is illustrated in Figure~\ref{fig:alg}.
\end{proof}

\begin{figure}[Htb]
\centerline{\psfig{file=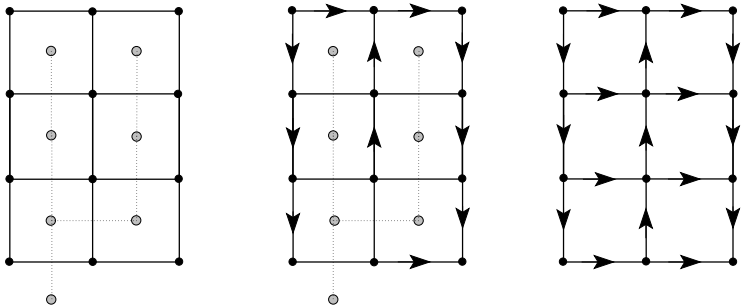,height=4cm}}
\caption{Explicit construction of a Kasteleyn orientation.}
\label{fig:alg}
\end{figure}

One might wonder what happens with the outer face of $\G\subset\R^2$. It turns out that the orientation $K$ of $\G\subset\R^2$ also satisfies the Kasteleyn condition around the outer
face if and only if $\G$ has an even number of vertices. This will follow from a more general statement in Section~\ref{sec:part-general}.

\begin{lemma}
\label{lemma:B}
Let $K$ be a Kasteleyn orientation on $\G\subset\R^2$. Then, given any counterclockwise oriented simple closed curve $C$ in $\G$,
$n^K(C)$ and the number of vertices of $\G$ enclosed by $C$ have opposite parity.
\end{lemma}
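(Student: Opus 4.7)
My plan is a direct double count: sum the Kasteleyn defect over all faces of $\G$ enclosed by $C$ and compare the result with $n^K(C)$ via Euler's formula.

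Let $R$ be the closed disk in $\R^2$ bounded by $C$, and consider the subgraph $\G_R\subset\G$ of everything lying in $R$. Write $V_{\mathrm{int}}$ for the number of vertices strictly inside $C$, $V_C$ for the number of vertices on $C$ (equal to the number of edges of $C$), $E_{\mathrm{int}}$ for the number of edges of $\G_R$ that are not edges of $C$, and $F$ for the number of faces of $\G$ contained in the interior of $R$. First I would sum $n^K(\partial f)$ over the $F$ interior faces $f$, each boundary oriented counterclockwise. An edge $e\in E_{\mathrm{int}}$ borders exactly two interior faces $f_1,f_2$, and the counterclockwise boundary orientations of $f_1,f_2$ induce opposite orientations on $e$; so $K$ agrees with exactly one of them, and $e$ contributes exactly $1$ to the sum. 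An edge $e$ of $C$ borders a single interior face, and I claim that the counterclockwise orientation of $\partial f$ restricts to $e$ as the counterclockwise orientation of $C$ (both have the interior of $R$ on the left). Hence the edges of $C$ contribute exactly $n^K(C)$, giving
\[
\sum_f n^K(\partial f)\;=\;E_{\mathrm{int}}+n^K(C).
\]

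Next I would use the Kasteleyn condition: each of the $F$ summands on the left is odd, so the left-hand side is congruent to $F$ modulo $2$. Therefore
\[
n^K(C)\;\equiv\; F+E_{\mathrm{int}} \pmod 2.
\]
Finally, Euler's formula applied to the planar graph $\G_R$ (whose face set consists of the $F$ interior faces together with one unbounded face) reads $(V_{\mathrm{int}}+V_C)-(E_{\mathrm{int}}+V_C)+(F+1)=2$, i.e.\ $E_{\mathrm{int}}-F=V_{\mathrm{int}}-1$. Since $F+E_{\mathrm{int}}\equiv E_{\mathrm{int}}-F\pmod 2$, this gives $n^K(C)\equiv V_{\mathrm{int}}-1\equiv V_{\mathrm{int}}+1\pmod 2$, which is exactly the statement.

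The only real subtlety I expect is the orientation bookkeeping in the first step: verifying that for an edge $e\subset C$ the induced boundary orientation on $e$ from its (unique) adjacent interior face $f$ agrees with the counterclockwise orientation of $C$. This is the one point where planarity of the embedding and the convention that counterclockwise means "face on the left" really enter; everything else is a mechanical combination of Kasteleyn's condition with Euler's formula.
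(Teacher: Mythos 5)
Your proof is correct, but it takes a genuinely different route from the one in the text. The paper argues by induction on the number of faces enclosed by $C$: it peels off a face $f$ whose closure meets $C$ in an interval, compares $n^K(C)$ with $n^K(C')$ for the smaller region, and has to set aside a degenerate case in which no such face exists. You instead perform a single global double count of $\sum_f n^K(\partial f)$ over all enclosed faces --- each interior edge is traversed twice with opposite induced orientations and hence contributes exactly $1$, while each edge of $C$ is traversed once, with the orientation of $C$ itself --- and then convert $F+E_{\mathrm{int}}$ into $V_{\mathrm{int}}-1$ via Euler's formula. This is essentially a discrete Stokes argument; it is non-inductive, treats all configurations uniformly, and in particular absorbs the paper's degenerate case at no extra cost. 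Two small points deserve a word in a final write-up: an interior edge may be a bridge, in which case it borders the \emph{same} enclosed face on both sides, but the boundary walk of that face still traverses the edge twice in opposite directions, so the contribution is still exactly $1$; and Euler's formula $V-E+F=2$ requires the observation that $\G_R$ is connected (any vertex strictly inside $R$ is joined to $C$ by a path in $\G$ whose initial segment cannot leave $R$ before reaching a vertex of $C$). Neither point affects the validity of the argument, and the orientation check you single out as the main subtlety is indeed correct: the counterclockwise boundary of an enclosed face and the counterclockwise orientation of $C$ both keep the interior of $R$ on the left, so they agree along each edge of $C$.
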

\begin{proof}
Let us check this result by induction on the number $k\ge 1$ of faces enclosed by $C$. The case $k=1$ is exactly the Kasteleyn condition, so let us assume the statement true for any
simple closed curve enclosing up to $k\ge 1$ faces, and let $C=\partial S$ be a simple closed curve enclosing $k+1$ faces and $m$ vertices.
Pick a face $f$ in $S$ such that the closure of $f$ intersects $C$ in an interval. (This always exists, except in the case where a unique face $f$ meets $C$ in more than a vertex; this degenerate case
can be treated in a similar way.)
Then, $S':=S\setminus f$ contains $k$ faces, and $C':=\partial S'$ is still a simple closed curve. Also, $f$ meets $S'$ in an interval made of a certain number -- say $\ell\ge 1$ -- of edges of $\G$.
It follows that $S'$ contains $m-\ell+1$ vertices. Using the induction hypothesis and the Kasteleyn condition around $f$, we get the modulo 2 equality
\[
n^K(C)=n^K(C')+n^K(\partial f)-\ell\equiv(m-\ell)+1-\ell\equiv m+1,
\]
and the lemma is proved.
\end{proof}

\begin{lemma}
\label{lemma:C}
Any Kasteleyn orientation on $\G\subset\R^2$ is a Pfaffian orientation.
\end{lemma}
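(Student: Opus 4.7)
The plan is to show that the Pfaffian condition for the orientation $K$ follows essentially for free from Lemma~\ref{lemma:B}, once we exploit planarity to constrain how a dimer configuration on $\G\setminus C$ can pair up the vertices inside and outside the cycle $C$.

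More precisely, let $C$ be an even-length simple closed cycle in $\G\subset\R^2$ such that $\G\setminus C$ (the graph obtained by deleting the vertices of $C$ and their incident edges) admits a perfect matching $D$. First I would observe that the parity of $n^K(C)$ does not depend on the orientation chosen on $C$, since $C$ has even length; hence we may assume $C$ is oriented counterclockwise and apply Lemma~\ref{lemma:B}. This yields
\[
n^K(C) \equiv m+1 \pmod 2,
\]
where $m$ is the number of vertices of $\G$ strictly enclosed by $C$.

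The key remaining point is that $m$ must be even. Indeed, $C$ is a Jordan curve in $\R^2$, so every edge of $\G$ that is disjoint from the vertex set of $C$ lies entirely inside $C$ or entirely outside $C$; no edge of $\G\setminus C$ can cross $C$. Consequently, the restriction of $D$ to the $m$ vertices strictly enclosed by $C$ is itself a perfect matching of those vertices, which forces $m$ to be even. Plugging back into the congruence above gives $n^K(C)$ odd, which is exactly the Pfaffian condition stated before Theorem~\ref{thm:Kast}.

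The argument is almost immediate once Lemmas~\ref{lemma:A} and~\ref{lemma:B} are in hand; the only conceptual step is the planarity observation that a dimer configuration on $\G\setminus C$ cannot pair an interior vertex with an exterior one, so the two sides of $C$ must individually carry an even number of vertices. This is the step that makes the whole proof work, and also the step that will fail on non-planar surfaces, motivating the higher-genus refinements discussed later in the notes.
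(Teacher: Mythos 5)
Your proof is correct and follows exactly the paper's argument: apply Lemma~\ref{lemma:B} and observe that the number of enclosed vertices is even because the dimer configuration on $\G\setminus C$ must match interior vertices among themselves. You simply spell out the Jordan-curve justification that the paper leaves implicit.
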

\begin{proof}
Let $C$ be a cycle of even length in $\G$ such that $\G\setminus C$ admits a dimer configuration $D$. This implies that $C$ encloses an even number of vertices, as these vertices are matched 
by $D$. By Lemma~\ref{lemma:B}, $n^K(C)$ is odd, so $K$ is a Pfaffian orientation.
\end{proof}

\begin{figure}[Htb]
\centerline{\psfig{file=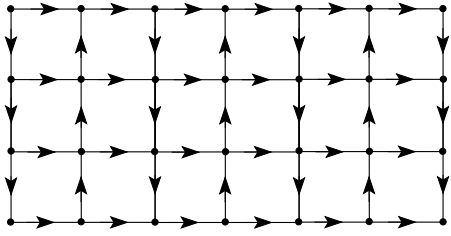,height=3cm}}
\caption{A Kasteleyn orientation on a square lattice.}
\label{fig:square}
\end{figure}

\begin{example}
\label{ex:baby1}
The orientation given in Figure~\ref{fig:baby1} is Kasteleyn. The corresponding Pfaffian is equal to
\[
\Pf(A^K)=\Pf\begin{pmatrix}0&\nu_1+\nu_2&0&0\cr -\nu_1-\nu_2&0&-\nu_3&\nu_5\cr 0&\nu_3&0&\nu_4\cr 0&-\nu_5&-\nu_4&0\end{pmatrix}=\nu_1\nu_4+\nu_2\nu_4.
\]
This is clearly equal to the dimer partition function $Z(\G;\nu)$.
\end{example}

As a less trivial example, let us now consider the case of the square lattice. Note that this example is a very special case of Theorem~\ref{thm:free} below.

\begin{example}\label{ex:square}
Let $\G_{mn}$ be the $m\times n$ square lattice, and let $\nu$ be the edge weight system assigning to each horizontal (resp. vertical) edge the weight $x$ (resp. $y$). If neither $m$ nor $n$ are
even, then $\G_{mn}$ has an odd number of vertices and the partition function $Z(\G_{mn};\nu)$ vanishes. Let us therefore assume that $m$ is even. Let $K$ be the orientation
of $\G_{mn}\subset\R^2$ illustrated in Figure~\ref{fig:square}. This orientation is obviously Kasteleyn, so $Z(\G_{mn};\nu)=|\Pf(A^K)|=\sqrt{\det(A^K)}$, where $A_K$ is the
$mn\times mn$ weighted skew-adjacency matrix associated to $K$. The determinant of $A^K$ can be computed explicitly, leading to the following formula (see~\cite{Ka1}):
\[
Z(\G_{mn};x,y)=\prod_{k=1}^{\frac{1}{2}m}\prod_{\ell=1}^{n}2\Big(x^2\cos^2\frac{k\pi}{m+1}+y^2\cos^2\frac{\ell\pi}{n+1}\Big)^{1/2}.
\]
Using this equation, Kasteleyn computed the following limit:
\[
\lim_{n\to\infty}\frac{1}{n^2}\log Z(\G_{nn};x,y)=\frac{1}{\pi^2}\int_0^{\frac{\pi}{2}}\!\!\!\int_0^{\frac{\pi}{2}}\log(4(x^2\cos^2\varphi+y^2\cos^2\psi))d\varphi d\psi.
\]
In particular, setting $x=y=1$ in the first equation above, we find that the number of dimer configurations on the $m\times n$ square lattice is equal to
\[
\#\D(\G_{mn})=\prod_{k=1}^{\frac{1}{2}m}\prod_{\ell=1}^{n}2\Big(\cos^2\frac{k\pi}{m+1}+\cos^2\frac{\ell\pi}{n+1}\Big)^{1/2}.
\]
For example, the $4\times 3$ lattice illustrated in Figure~\ref{fig:alg} admits 11 dimer configurations.
For $n\times n$ lattices with $n=2,4,6,8$, this number is equal to $2,36,6728,12988816$, respectively. Finally, setting $x=y=1$ in the second equation displayed above, Kasteleyn
established that this number grows as
\[
\#\D(\G_{nn})\sim e^{\frac{G}{\pi}n^2},
\]
where $G=1^{-2}-3^{-2}+5^{-2}-7^{-2}+\dots=0.915965\dots$ is Catalan's constant.
\end{example}

\begin{remark}
If a finite graph $\G$ is bipartite (with $B$ black and $W$ white vertices), then any associated skew-adjacency matrix will be of the form
\[
A^K=\begin{pmatrix}0&M\cr -M^T&0\end{pmatrix},
\]
with $M$ of size $B\times W$. If $\G$ admits a dimer configuration, then this matrix is square (say, of size $k$), and
\[
\Pf(A^K)=(-1)^{\frac{k(k-1)}{2}}\det(M).
\]
Therefore, if $\G$ is a planar graph and $K$ is a Kasteleyn orientation, then
\[
Z(\G;\nu)=|\det(M)|.
\]
\end{remark}

\medskip

What about non-planar graphs? There is no hope to extend Theorem~\ref{thm:Kast} {\em verbatim\/} to the general case, as some graphs
do not admit a Pfaffian orientation. (The complete bipartite graph $K_{3,3}$ is the simplest example).
More generally, enumerating the dimer configurations on a graph is a $\#P$-complete problem \cite{Val}. 

However, Kasteleyn~\cite{Ka1} was able to compute the dimer partition function for the square lattice on the torus using 4 Pfaffians (see Example~\ref{ex:toric} below).
He also stated without further detail that the partition function for a graph of genus $g$ requires $2^{2g}$ Pfaffians~\cite{Ka2}. Such a formula was found much later by
Tesler~\cite{Tes} and Gallucio-Loebl~\cite{G-L}, independently. (See also~\cite{D96}.) These authors generalize by brute force, so to speak, the combinatorial proof
given by Kasteleyn in the planar case and in the case of the biperiodic square lattice.

The aim of the following two sections is to explain an alternative, geometric version of this Pfaffian formula~\cite{C-RI,C-RII,Cim}.
This approach relies on some geometric tools that we very briefly recall in the following (somewhat dry) section.


\section{Homology, quadratic forms and spin structures}

The planarity of $\G$ was used in a crucial way in the proof of Kasteleyn's theorem: we used the fact that a cycle in a planar graph bounds a collection of faces.
In general, any finite graph can be embedded in a closed orientable surface $\Sigma$ of some genus $g$, but if $g$ is non-zero, then some cycles might not bound faces.

\medskip

There is a standard tool in algebraic topology to measure how badly the fact above does not hold: it is called {\em homology\/}, and we now briefly recall its definition in our
context.

Given a graph $\G\subset\SI$ whose complement consists of topological discs, let $C_0$ (resp. $C_1$, $C_2$) denote the $\Z_2$-vector space with basis the set of vertices
(resp. edges, faces) of $\G\subset\SI$. Also, let $\partial_2\colon C_2\to C_1$ and $\partial_1\colon C_1\to C_0$
denote the {\em boundary operators\/} defined in the obvious way. Since $\partial_1\circ\partial_2$ vanishes, the space of {\em $1$-cycles\/} $\mathrm{ker}(\partial_1)$
contains the space $\partial_2(C_2)$ of {\em $1$-boundaries\/}. The {\em first homology space\/} $H_1(\SI;\Z_2):=\mathrm{ker}(\partial_1)/\partial_2(C_2)$ turns out not to depend
on $\G$, but only on $\SI$: it has dimension $2g$, where $g$ is the genus of the closed connected orientable surface $\SI$.
Note that the intersection of curves defines a non-degenerate bilinear form on $H_1(\SI;\Z_2)$, that will be denoted by $(\alpha,\beta)\mapsto \alpha\cdot\beta$.

\medskip

We now turn to quadratic forms. Let $H$ be a $\Z_2$-vector space endowed with a non-degenerate bilinear form~$(\alpha,\beta)\mapsto \alpha\cdot\beta$.
A {\em quadratic form\/} on $(H,\cdot)$ is a map
$q\colon H\to\Z_2$ such that $q(\alpha+\beta)=q(\alpha)+q(\beta)+\alpha\cdot\beta$ for all $\alpha,\beta\in H$. Note that there are exactly $|H|$ quadratic forms on $(H,\cdot)$,
as the set of such forms is an affine space over $\mathit{Hom}(H,\Z_2)$. Furthermore, it can be showed~\cite{Arf} that these forms are classified by their {\em Arf invariant\/}
$\A(q)\in\Z_2$.
We shall need a single property of this invariant, namely that it satisfies the equality
\begin{equation}
\label{equ:Arf}
\frac{1}{\sqrt{|H|}}\sum_{q}(-1)^{\A(q)+q(\alpha)}=1
\end{equation}
for any $\alpha\in H$, where the sum is over all quadratic forms on $(H,\cdot)$.

\medskip

Quadratic forms are the algebraic avatar of a geometric object called a {\em spin structure\/}. We shall not go into the trouble of giving the formal definition (see e.g.~\cite{Ati});
let us simply recall that any spin structure on a closed orientable surface is given by a vector field with zeroes of even index.

The relationship between spin structures and quadratic forms is given by the following classical result of Johnson~\cite{Joh}.
Consider a spin structure $\lambda$ on $\SI$ represented by a vector field $Y$ on $\SI$ with zeroes of even index. Given a piecewise smooth closed curve
$\gamma$ in $\SI$ avoiding the zeroes of $Y$, let $\mathit{rot}_\lambda(\gamma)\in 2\pi\Z$ denote the rotation angle of the velocity vector of $\gamma$ with respect to $Y$.
Then, given a homology class $\alpha\in H_1(\SI;\Z_2)$ represented by the disjoint union of oriented simple closed curves $\gamma_j$, the equality
$(-1)^{q_\lambda(\alpha)}=\prod_j-\exp\left(\textstyle{\frac{i}{2}}\mathit{rot}_\lambda(\gamma_j)\right)$
gives a well-defined quadratic form on $(H_1(\SI;\Z_2),\cdot)$. Furthermore, Johnson's theorem asserts that the mapping~$\lambda\mapsto q_\lambda$
defines a bijection between the set of spin structures on~$\SI$ and the set of quadratic forms on~$(H_1(\SI;\Z_2),\cdot)$.


\section{The partition function for general graphs}
\label{sec:part-general}

Let $\G\subset\SI$ be a fixed {\em surface graph\/}, i.e. a graph embedded in an orientable closed surface $\SI$ such that $\SI\setminus\G$ consists of topological disks.
Throughout this section, we shall use the same notation $X$ for the surface graph and the induced cellular decomposition of $\SI$.

We will now try to encode combinatorially a spin structure on a surface $\SI$, or
equivalently, a vector field on $\SI$ with isolated zeroes of even index. This will lead us to a very natural ``re-discovery" of the notion of a Kasteleyn orientation. 
Let $X$ be a fixed cellular decomposition of $\SI$.

$\bullet$
To construct a (unit length) vector field along the 0-skeleton $X^0$, we just need to specify one tangent
direction at each vertex of $X$. Such an information is given by a dimer configuration $D$ on $X^1$: at each vertex,
point in the direction of the adjacent dimer.

$\bullet$
This vector field along $X^0$ extends to a unit vector field on $X^1$, but not uniquely. Roughly speaking,
it extends in two different natural ways along each edge of $X^1$, depending on the sense of rotation of the resulting
vector field. We shall
encode this choice by an orientation $K$ of the edges of $X^1$, together with the following convention: moving along
an oriented edge, the tangent vector first rotates counterclockwise until it points in the direction of the edge,
then rotates clockwise until it points backwards, and finally rotates counterclockwise until it coincides with
the tangent vector at the end vertex. This is illustrated in Figure~\ref{fig:vector}.

\begin{figure}[Htb]
\labellist\small\hair 2.5pt
\pinlabel {$K$} at 220 165
\pinlabel {$D$} at 75 60
\pinlabel {$D$} at 395 190
\endlabellist
\centerline{\psfig{file=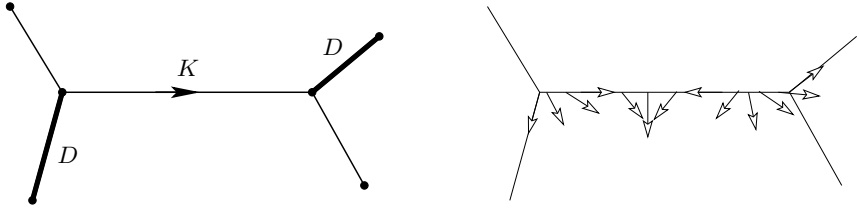,height=2.7cm}}
\caption{Construction of the vector field along the 1-skeleton of $X$.}
\label{fig:vector}
\end{figure}

$\bullet$
Each face of $X$ being homeomorphic to a 2-disc, the unit vector field defined along $X^1$ naturally extends to
a vector field $Y_D^K$ on $X$, with one isolated zero in the interior of each face. 

One easily checks that for each face $f$ of $X$, the index of the zero of $Y_D^K$ in $f$ has the parity of $n^K(\partial f)+1$.
Therefore, the vector field $Y_D^K$ defines a spin structure if and only if $K$ satisfies the Kasteleyn condition around each face.
Hence, we shall say that an orientation $K$ of the 1-cells of a surface graph $X$ is a {\em Kasteleyn orientation on $X$\/} if,
for each face $f$ of $X$, the number $n^K(\partial f)$ is odd.

Given a Kasteleyn orientation on $X$, there is an obvious way to obtain another one: pick a vertex of $X$ and flip
the orientation of all the edges adjacent to it. Two Kasteleyn orientations are said to be {\em equivalent\/} if
they can be related by such moves. Let us denote by $\K(X)$ the set of equivalence classes of Kasteleyn
orientations on $X$. 

Using standard cohomological techniques (see~\cite{C-RI}), it can be showed that a surface graph $X$ admits a Kasteleyn orientation if and only if $X$ has an even number of vertices.
In this case, the set $\K(X)$ admits a freely transitive action of the group $H^1(\SI;\Z_2)=\mathit{Hom}(H_1(\SI;\Z_2),\Z_2)$.

\medskip

Summarizing this section so far, a dimer configuration $D$ on $X^1$ and a Kasteleyn orientation $K$ on $X$ determine a spin structure on $\SI$,
i.e: a quadratic form $q^K_D\colon H_1(\SI;\Z_2)\to\Z_2$. Furthermore, this quadratic form can be computed explicitely. If $C$ is an oriented simple closed curve in $\G$, then
\begin{equation}
\label{eqn:q}
q^K_D([C])=n^K(C)+\ell_{D}(C)+1,
\end{equation}
where $\ell_{D}(C)$ denotes the number of vertices of $C$ where the adjacent dimer of $D$ sticks out to the left of $C$.
With this formula, one immediately sees that the quadratic form $q^K_D$ only depends on the equivalence class $[K]$ of the Kasteleyn orientation $K$. With some more work, one can
show that the map $[K]\mapsto q^K_D$ defines an $H^1(\SI;\Z_2)$-equivariant bijection from the set of equivalence classes of Kasteleyn orientations on $X$ to the set of
spin structures on $\SI$. Therefore, Equation~(\ref{equ:Arf}) translates into
\begin{equation}
\label{eqn:Arf}
\frac{1}{2^g}\sum_{[K]\in\K(X)}(-1)^{\A(q^K_{D})+q^K_{D}(\alpha)}=1
\end{equation}
for all $\alpha\in H_1(\SI;\Z_2)$.

However, the most important outcome of this discussion is that we now know ``for free'' that the formula displayed above gives a well-defined quadratic form on $H_1(\SI;\Z_2)$.
Let us use this fact to solve our original problem, i.e. the computation of the dimer partition function.

\medskip

Let $\G$ be a finite connected graph endowed with an edge weight system $\nu$. If $\G$ does not admit any dimer
configuration, then the partition function $Z(\G;\nu)$ is obviously zero, so let us assume that $\G$ admits a dimer
configuration $D_0$. Enumerate the vertices of $\G$ by $1,2,\dots,2n$ and embed $\G$ in a closed orientable surface
$\SI$ of genus $g$ as the 1-skeleton of a cellular decomposition $X$ of $\SI$.

Since $\G$ has an even number of vertices, it admits a Kasteleyn orientation $K$. Replacing it with an equivalent orientation, we can assume that $\e^K(D_0)=1$.
The Pfaffian of the associated Kasteleyn matrix then satisfies
\begin{align*}
\Pf(A^K)&\overset{(\ref{equ:Pf})}{=}\sum_{D\in\D(\G)}\e^K(D_0)\e^K(D)\,\nu(D)\\
	&\overset{(\ref{equ:cc})}{=}\sum_{D\in\D(\G)}(-1)^{\sum_i (n^K(C_i)+1)}\nu(D),
\end{align*}
where $D\Delta D_0=\bigsqcup_iC_i$. Note that given any vertex of $C_i$, the adjacent dimer of $D_0$ lies on $C_i$, so that $\ell_{D_0}(C_i)=0$.
Since the cycles $C_i$ are disjoint, the quadratic form is linear on them and we get
\[
\sum_i(n^K(C_i)+1)=\sum_i(n^K(C_i)+\ell_{D_0}(C_i)+1)\overset{(\ref{eqn:q})}{=}\sum_iq^K_{D_0}(C_i)=q^K_{D_0}([D\Delta D_0]).
\]
Therefore, for every element $[K]$ of $\K(X)$, we have
\begin{equation}
\label{eqn:Pf}
\Pf(A^K)=\sum_{D\in\D(\G)}(-1)^{q^K_{D_0}([D\Delta D_0])}\nu(D)=\sum_{\alpha\in H_1(\SI;\Z_2)}(-1)^{q^K_{D_0}(\alpha)}\sum_{[D\Delta D_0]=\alpha}\nu(D),
\end{equation}
the last sum being over all $D\in\D(\G)$ such that the homology class of $D\Delta D_0$ is $\alpha$. This leads to:
\begin{align*}
Z(\G;\nu)&=\sum_{\alpha\in H_1(\SI;\Z_2)}\sum_{[D\Delta D_0]=\alpha}\nu(D)\\
&\overset{(\ref{eqn:Arf})}{=}\sum_{\alpha\in H_1(\SI;\Z_2)}\Big(\frac{1}{2^g}\sum_{[K]\in\K(X)}(-1)^{\A(q^K_{D_0})+q^K_{D_0}(\alpha)}\Big)\sum_{[D\Delta D_0]=\alpha}\nu(D) \\
&=\frac{1}{2^g}\sum_{[K]\in\K(X)}(-1)^{\A(q^K_{D_0})}\sum_{\alpha\in H_1(\SI;\Z_2)}(-1)^{q^K_{D_0}(\alpha)}\sum_{[D\Delta D_0]=\alpha}\nu(D)\\
&\overset{(\ref{eqn:Pf})}{=}\frac{1}{2^g}\sum_{[K]\in\K(X)}(-1)^{\A(q^K_{D_0})}\Pf(A^K)\,.
\end{align*}

We have proved the following {\em Pfaffian formula\/}, first obtained in~\cite{C-RI}.

\begin{theorem}
\label{thm:Pf}
Let $\G$ be a graph embedded in a closed oriented surface $\SI$ of genus $g$ such that $\SI\setminus\G$ consists of topological disks.
Then, the partition function of the dimer model on $\Gamma$ is given by the formula
\[
Z(\G;\nu)=\frac{1}{2^{g}}\sum_{[K]\in\K(\G\subset\SI)}(-1)^{\A(q^{K}_{D_0})}\Pf(A^{K}).
\]
\end{theorem}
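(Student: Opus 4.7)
The plan is to prove the formula by combining three ingredients already developed in the section: the Pfaffian expansion in terms of dimer configurations, the explicit formula~(\ref{eqn:q}) for the quadratic form $q^K_D$ associated to a Kasteleyn orientation, and the Arf-invariant identity~(\ref{eqn:Arf}).

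First I would dispose of the degenerate case: if $\D(\G)=\emptyset$ then $Z(\G;\nu)=0$ and each Pfaffian also vanishes (every monomial in $\Pf(A^K)$ corresponds to a dimer configuration), so both sides agree. Otherwise I fix a reference dimer configuration $D_0$ and embed $\G$ as the $1$-skeleton of a cellular decomposition $X$ of a closed oriented surface $\SI$ of genus $g$. I also fix, in each equivalence class $[K]\in\K(X)$, a representative; up to a global flip at a single vertex (which lies in the same equivalence class and does not change $\Pf(A^K)$) I may assume $\e^K(D_0)=1$.

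Next I would rewrite $\Pf(A^K)$ in the form already appearing in the excerpt: combining~(\ref{equ:Pf}) and~(\ref{equ:cc}) with $\e^K(D_0)=1$ gives
\[
\Pf(A^K)=\sum_{D\in\D(\G)}(-1)^{\sum_i(n^K(C_i)+1)}\nu(D),
\]
where $D\Delta D_0=\bigsqcup_i C_i$. The key observation is that along each cycle $C_i$ every dimer of $D_0$ is actually an edge of $C_i$, so it never sticks out to the left, i.e.~$\ell_{D_0}(C_i)=0$. Inserting this zero term for free and invoking~(\ref{eqn:q}) identifies $n^K(C_i)+1$ with $q^K_{D_0}([C_i])$. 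Because the $C_i$ are pairwise disjoint, their mutual intersection numbers vanish, so the quadratic form is additive on them and $\sum_i q^K_{D_0}([C_i])=q^K_{D_0}([D\Delta D_0])$. This yields the identity~(\ref{eqn:Pf}) grouping dimer configurations by the homology class of $D\Delta D_0$.

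Finally I would multiply by $(-1)^{\A(q^K_{D_0})}$, sum over $[K]\in\K(X)$, and exchange the sums. Grouping configurations by $\alpha=[D\Delta D_0]\in H_1(\SI;\Z_2)$, the inner sum over $[K]$ produces the factor
\[
\frac{1}{2^g}\sum_{[K]\in\K(X)}(-1)^{\A(q^K_{D_0})+q^K_{D_0}(\alpha)},
\]
which equals $1$ by~(\ref{eqn:Arf}). What remains is simply $\sum_{\alpha}\sum_{[D\Delta D_0]=\alpha}\nu(D)=Z(\G;\nu)$, proving the theorem.

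The main subtle point I expect is the identification of the sign $\prod_i(-1)^{n^K(C_i)+1}$ appearing in the Pfaffian expansion with a single quadratic-form value $q^K_{D_0}([D\Delta D_0])$: this rests on the fortunate vanishing of $\ell_{D_0}(C_i)$ and on the fact that disjointness of the $C_i$ kills the bilinear correction in the quadratic-form axiom. Once that bridge is in place, the rest is bookkeeping and an invocation of the Arf identity.
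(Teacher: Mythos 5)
Your proposal is correct and follows essentially the same route as the paper: normalize a representative of each class so that $\e^K(D_0)=1$, expand $\Pf(A^K)$ via~(\ref{equ:Pf}) and~(\ref{equ:cc}), identify the sign with $q^K_{D_0}([D\Delta D_0])$ using $\ell_{D_0}(C_i)=0$ and the disjointness of the cycles, and finish with the Arf identity~(\ref{eqn:Arf}). One small correction: a flip at a vertex multiplies $\Pf(A^K)$ by $-1$ (it conjugates $A^K$ by a diagonal matrix of determinant $-1$), not by $+1$ as your parenthetical claims; what saves the argument is that it also multiplies $\e^K(D_0)$ by $-1$, so the normalized Pfaffian is well defined on the equivalence class.
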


Let us conclude this section with one example.

\begin{example}
\label{ex:toric}
Consider the $m\times n$ square lattice wrapped up around the torus.
As usual, let $\nu$ be the edge weight system assigning to each horizontal (resp. vertical) edge the weight $x$ (resp. $y$), and let us assume without loss of generality that $m$ is even.
Let $K$ be the orientation of $\G\subset\SI$ illustrated in Figure~\ref{fig:square}. We get (see~\cite{Ka1}):
\[
Z(\G;x,y)=\frac{1}{2}\left(P_{00}+P_{10}+P_{01}-P_{11}\right),
\]
where
\[
P_{\eps_1\eps_2}=\prod_{k=1}^{\frac{1}{2}m}\prod_{\ell=1}^{n}2\Big(x^2\sin^2\frac{(2\ell+\eps_2-1)\pi}{n}y^2\sin^2\frac{(2k+\eps_1-1)\pi}{m}\Big)^{1/2}.
\]
The Pfaffian $P_{11}$ vanishes (because of the factor corresponding to $k=m/2$ and $\ell=n$), so $Z(\G;x,y)=(P_{00}+P_{10}+P_{01})/2$.
This also shows that, unlike in the planar case, there exist graphs that admit dimer configurations but where some Pfaffian vanishes in the Pfaffian formula.
\end{example}


\section{Special Harnack curves}

We shall now make a small detour into an area {\em a priori\/} totally unrelated: real algebraic geometry. We shall be rather sketchy and refer the interested reader to~\cite{Mik,M-R} for further details.

Consider a real polynomial $P\in\R[z^{\pm 1},w^{\pm 1}]$. The associated {\em real plane curve\/} is
\[
\R A:=\{(z,w)\in(\R^*)^2\,;\,P(z,w)=0\}\subset (\R^*)^2.
\]
One can also consider the associated {\em complex plane curve\/}
\[
A:=\{(z,w)\in(\C^*)^2\,;\,P(z,w)=0\}\subset (\C^*)^2.
\]

It is a classical result due to Harnack~\cite{Har} that the number $c$ of connected components (or {\em ovals\/}) of a real algebraic plane curve of degree $d$ in
$\R P^2\supset(\R^*)^2$ is bounded by
\[
c \le \frac{(d-1)(d-2)}{2}+1.
\]
Which configurations of ovals can be realized is the subject of Hilbert's sixteenth problem, only solved up to degree $d=7$.

More generally, if $P(z,w)=\sum_{(m,n)\in\Z^2}a_{mn}z^mw^n$, one can define:
\begin{itemize}
\item{$\Delta:=\mathit{conv}\{(m,n)\in\Z^2\,|\,a_{mn}\neq 0\}$, the {\em Newton polygon\/} of $P$, with $g$ interior integral points and $s$ sides $\delta_1,\dots,\delta_s$;}
\item{using $\Delta$, a natural compactification $\R T_\Delta$ of $(\R^*)^2$, the associated {\em toric surface\/}, with $\R T_\Delta\setminus(\R^*)^2$ consisting of
$s$ real lines $\ell_1\cup\dots\cup\ell_s$ intersecting cyclically;}
\item{the closure $\R\overline{A}$ of $\R A$ in $\R T_\Delta$.}
\end{itemize}

It turns out that the number of ovals of $\R\overline{A}$ is bounded above by $g+1$, and that $\R\overline{A}$ meets the axis $\ell_i$ corresponding to the side $\delta_i$
at most $d_i$ times, where $d_i-1$ denotes the number of integral points in the interior of $\delta_i$.
We shall say that $\R A$ is a {\em special Harnack curve\/} if the number of ovals of $\R\overline{A}$ is $g+1$, only one of these ovals intersects the axes
$\ell_1\cup\dots\cup\ell_s$, and it does so in the maximal and cyclic way: $d_1$ times $\ell_1$, then $d_2$ times $\ell_2$, and so on.

Note that if $\Delta$ is a triangle, then $\R T_\Delta=\R P^2=(\R^*)^2\cup\ell_1\cup\ell_2\cup\ell_3$, where the $\ell_i$'s are the coordinate axes.
In this case, $\R\overline{A}$ is simply given by the zeroes of the homogeneous polynomial associated to $P$, and the inequality $c\le g+1$ corresponds to Harnack's upper bound.

\begin{example}
\label{ex:Harnack}
Consider the very simple example given by $P(z,w)=a+bz+cw$, with $a,b,c\in\R^*$. The associated Newton polygon is the triangle with
vertices $(0,0)$, $(1,0)$ and $(0,1)$, so we get $g=0$ and $d_1=d_2=d_3=1$. The corresponding curve is a special Harnack curve, since
\[
\R\overline{A}=\{[z:w:t]\in\R P^2\,;\,at+bz+cw=0\}
\]
has one oval and meets each axis once.
\end{example}

The following result of Mikhalkin is of great importance for the application of Harnack curves to the study of dimer models.

\begin{theorem}[Mikhalkin~\cite{Mik,M-R}]
\label{thm:Mik}
Fix $P\in\R[z^{\pm 1},w^{\pm 1}]$ with Newton polygon of positive area. Then, the following are equivalent.
\begin{enumerate}[1.]
\item{$\R A$ is a (possibly singular) special Harnack curve.}
\item{$A\subset(\C^*)^2$ meets every torus $\{(z,w)\in(\C^*)^2\,;\,|z|=r_1\,|w|=r_2 \}$ at most twice.}
\end{enumerate}
\end{theorem}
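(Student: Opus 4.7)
The plan is to study the \emph{amoeba map} $\mathrm{Log}\colon A \to \R^2$, $(z,w)\mapsto(\log|z|,\log|w|)$, whose image $\mathcal{A}=\mathrm{Log}(A)$ is the \emph{amoeba} of $A$. By definition, the fiber of $\mathrm{Log}$ above $(\log r_1,\log r_2)$ is exactly the intersection $A\cap\{(z,w)\in(\C^*)^2\,;\,|z|=r_1,\,|w|=r_2\}$, so condition~(2) is nothing but the statement that $\mathrm{Log}$ is at most $2$--to--$1$ globally. Since $P$ has real coefficients, complex conjugation acts on $A$ and preserves each fiber of $\mathrm{Log}$; its fixed locus is $\R A$, while the remaining points of $A$ appear in genuine conjugate pairs. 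The complement $\R^2\setminus\mathcal{A}$ decomposes into $s$ unbounded convex components indexed by the sides of $\Delta$ and at most $g$ bounded components indexed by interior integer points of $\Delta$, a dictionary I would use throughout.

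For (1)~$\Rightarrow$~(2), I would argue that if $\R A$ is a special Harnack curve then its $g+1$ ovals map under $\mathrm{Log}$ to the $g+1$ components of $\partial\mathcal{A}$, each homeomorphically, so that $\mathrm{Log}$ looks like a simple fold along $\R A$. Over a point in the interior $\mathcal{A}^\circ$ the fiber then consists of either one pair of complex conjugate points or two real points, giving the $2$--to--$1$ bound. The cyclic intersection pattern with the toric boundary ensures that the unique oval meeting $\ell_1\cup\cdots\cup\ell_s$ traces out the $s$ unbounded components of $\partial\mathcal{A}$, while the other $g$ ovals project onto the boundaries of the $g$ bounded complement components. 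An area estimate comparing $\mathrm{Area}(\mathcal{A})$ with $\pi^2\mathrm{Area}(\Delta)$, in the spirit of Passare--Rullg\aa rd, pins the generic fibre size exactly at two and rules out any higher sheetedness.

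Conversely, for (2)~$\Rightarrow$~(1), I would begin by characterising the critical set of $\mathrm{Log}$: a smooth point $(z,w)\in A$ is critical iff the complex tangent line maps under $d\mathrm{Log}=(dz/z,dw/w)$ to a real line in $\R^2$. If $\mathrm{Log}$ is at most $2$--to--$1$, then every critical point must be a simple fold, and the local normal form together with the complex-conjugate symmetry of the fibers forces every critical point to be fixed by conjugation, hence to lie on $\R A$. It follows that $\partial\mathcal{A}=\mathrm{Log}(\R A)$ and that each oval of $\R A$ maps homeomorphically onto a boundary component of $\mathcal{A}$. Counting components of $\partial\mathcal{A}$ gives at least $g+1$ ovals for $\R A$, and Harnack's upper bound forces equality; tracking the unbounded components corresponding to $\ell_1,\ldots,\ell_s$ and their cyclic ordering along the unique unbounded oval then yields the special Harnack property. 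The hardest step will be handling the possibly singular case allowed in the statement: one must pass to a normalisation of $A$ and verify that the fold analysis, the identification $\partial\mathcal{A}=\mathrm{Log}(\R A)$, and the cyclic intersection count with the toric boundary all survive nodes and higher singularities of the real curve.
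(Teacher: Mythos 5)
Be aware first that the paper itself offers no proof of Theorem~\ref{thm:Mik}: it is quoted with attribution to Mikhalkin and Mikhalkin--Rullg\aa rd, and the section containing it explicitly announces that it will be ``rather sketchy'' and refers the reader to~\cite{Mik,M-R}. So there is no in-text argument to compare yours against. What can be said is that the amoeba-map strategy you outline is indeed the one used in those references, where both conditions are shown to be equivalent to a third: that the amoeba $\mathcal{A}=\mathrm{Log}(A)$ attains the maximal possible area $\pi^2\,\mathrm{Area}(\Delta)$.

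Measured against that actual proof, your sketch has genuine gaps, concentrated in the direction $(2)\Rightarrow(1)$. The step ``counting components of $\partial\mathcal{A}$ gives at least $g+1$ ovals'' is backwards: the order map of Forsberg--Passare--Tsikh only shows that $\R^2\setminus\mathcal{A}$ has \emph{at most} $g$ bounded components, so counting boundary components yields the upper bound $g+1$, not the lower bound you need. Producing all $g$ bounded complement components (hence all $g+1$ ovals) from the $2$--to--$1$ hypothesis is precisely the technical heart of Mikhalkin--Rullg\aa rd: one must deduce that $\mathrm{Area}(\mathcal{A})=\pi^2\mathrm{Area}(\Delta)$ and then use the Monge--Amp\`ere estimates of Passare--Rullg\aa rd to conclude that every lattice point of $\Delta$ is the order of a complement component. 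In your proposal this machinery appears only in the forward direction, and only as a final ``pinning'' device, whereas it is the engine of the converse. Separately, the \emph{special} part of the conclusion --- that the unique unbounded oval meets $\ell_1,\dots,\ell_s$ exactly $d_1,\dots,d_s$ times and in cyclic order --- is asserted rather than derived; it requires showing that the unbounded complement components are indexed by \emph{all} $d_1+\dots+d_s$ boundary lattice points of $\Delta$ (not by the $s$ sides, as you state) and that consecutive tentacles of $\mathcal{A}$ are separated by exactly one real branch tending to the corresponding toric divisor. Your fold analysis of the critical points of $\mathrm{Log}$, and the conjugation argument forcing them onto $\R A$, are sound, and your concern about the singular case is well placed; but as written the proposal does not close either of the two counting steps above.
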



\section{Bipartite graphs on the torus}
\label{sec:KOS}

After this small detour, let us come back to dimers. Consider the case of a bipartite graph $\G$ (with same number of black and white vertices) embedded in the torus $\mathbb{T}^2$,
and fix a Kasteleyn orientation on $\G\subset\mathbb{T}^2$.

Recall that the corresponding Kasteleyn matrix is of the form $A^K=\left(\begin{smallmatrix}0&M\cr -M^T&0\end{smallmatrix}\right)$,
so $\Pf(A^K)=\pm\det(M)$. Let us modify $M$ as follows: pick oriented simple closed curves $\gamma_x$ and $\gamma_y$ on the torus, transverse to $\G$, representing a basis
of the homology $H_1(\mathbb{T}^2)$, and multiply the coefficient corresponding to each edge $e$ by $z^{\gamma_x\cdot e}w^{\gamma_y\cdot e}$, where $\cdot$ denotes
the intersection number and $e$ is oriented from the white to the black vertex. Let $M(z,w)$ denote the resulting matrix. Its determinant
\[
P(z,w):=\det(M(z,w))\in\R[z^{\pm 1},w^{\pm 1}]
\]
is called the {\em characteristic polynomial\/} of $\G$. We shall say that a weighted bipartite toric graph $\G\subset\mathbb{T}^2$ is {\em non-degenerate\/} if the Newton polygon
of its characteristic polynomial has positive area.

Note that the formula for the dimer partition function now reads
\begin{equation}\label{eqn:Z}
Z(\G;\nu)=\frac{1}{2}\left(-P(1,1)+P(-1,1)+P(1,-1)+P(-1,-1)\right)
\end{equation}
for a well-chosen Kasteleyn orientation.

\begin{figure}[Htb]
\labellist\small\hair 2.5pt
\pinlabel {$a$} at 135 75
\pinlabel {$b$} at 115 22
\pinlabel {$c$} at 235 85
\pinlabel {$\gamma_x$} at 230 2
\pinlabel {$\gamma_y$} at 76 115
\endlabellist
\centerline{\psfig{figure=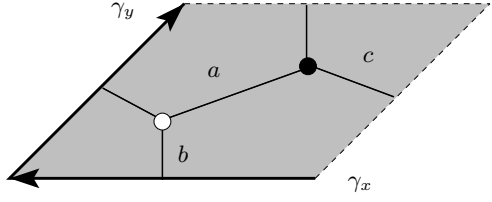,height=2.5cm}}
\caption{A fundamental domain for the hexagonal lattice wrapped up on the torus.}
\label{fig:hexa}
\end{figure}

\begin{example}\label{ex:hexa}
Consider the weighted toric bipartite graph illustrated in Figure~\ref{fig:hexa}. Note that this is nothing but the simplest fundamental domain of the hexagonal lattice.
On this example, the orientation from the white to the black vertex is Kasteleyn, and with the choice of curves $\gamma_x$ and $\gamma_y$ illustrated in this figure,
one computes 
\[
P(z,w)=M(z,w)=a+bz+cw.
\]
As seen in Example~\ref{ex:Harnack}, it is a special Harnack curve.
\end{example}

This is no accident.

\begin{theorem}[Kenyon-Okounkov-Sheffield~\cite{K-O,KOS}]
Let $(\G,\nu)\subset\mathbb{T}^2$ be a non-degenerate weighted bipartite graph embedded in the torus. Then, the real algebraic curve $\R A$ associated with $P(z,w)$
is a (possibly singular) special Harnack curve.
\end{theorem}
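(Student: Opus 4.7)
The plan is to apply Mikhalkin's characterization (Theorem~\ref{thm:Mik}): since the Newton polygon of $P$ has positive area by hypothesis, it suffices to prove that the complex curve $A \subset (\C^*)^2$ meets every real torus $T_{r_1,r_2} := \{(z,w) \in (\C^*)^2 : |z|=r_1,\, |w|=r_2\}$ in at most two points, counted with multiplicity. This is the sharper analytic statement whose proof I will now outline.

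First I would perform a gauge reduction. Given $(r_1,r_2) \in (\R_{>0})^2$, consider the new positive edge-weight system $\nu'(e) := \nu(e) \cdot r_1^{\gamma_x \cdot e}\, r_2^{\gamma_y \cdot e}$. A direct inspection of the construction of $M(z,w)$ shows that the characteristic polynomial of $(\G, \nu')$ equals $P_{\nu'}(z,w) = P_\nu(r_1 z,\, r_2 w)$. Hence zeros of $P_\nu$ on $T_{r_1,r_2}$ biject with zeros of $P_{\nu'}$ on the unit torus $T_{1,1}$, and one is reduced to proving the bound for $(r_1,r_2)=(1,1)$, uniformly over all positively weighted non-degenerate bipartite toric graphs.

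Next, on $T_{1,1}$ the matrix $M(z,w)$ is obtained from the real Kasteleyn matrix of $(\G,\nu)$ by multiplying each edge weight by the unit-modulus phase $z^{\gamma_x \cdot e} w^{\gamma_y \cdot e}$, so that $H(z,w) := M(z,w)\,M(z,w)^*$ is a positive semi-definite Hermitian ``magnetic Laplacian'' on the white vertices with $\det H(z,w) = |P(z,w)|^2$. A zero of $P$ on $T_{1,1}$ therefore corresponds to a point at which $H(z,w)$ has a non-trivial kernel, i.e.\ to a non-zero twisted ``discrete harmonic'' section of the corresponding flat line bundle. I would then bound the number of such zeros by a monodromy argument: using the sign pattern of the four values $P(\pm 1,\pm 1)$ — which by the toric instance of Theorem~\ref{thm:Pf} are (up to Arf signs) the four Kasteleyn Pfaffians associated to the four spin structures, and which, because the weights are positive, satisfy exactly one sign inversion as in Equation~(\ref{eqn:Z}) — together with a degree computation of the winding of $P(z,w)$ along the two generating loops of $T_{1,1}$, one forces the total zero count on $T_{1,1}$ to be at most two. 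Multiplicities are accommodated by the same winding count.

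The main obstacle is this last step: stitching the sign pattern at the four real points to the monodromy count so as to yield the sharp bound $2$, rather than a larger even number. An alternative, perhaps more robust route would be by induction on the complexity of $\G$ using local transformations (spider / urban-renewal moves), which preserve both the spectral curve up to biholomorphism of $(\C^*)^2$ and the class of special Harnack curves, with the base case being the hexagonal fundamental domain of Example~\ref{ex:hexa}, whose spectral curve $a + bz + cw$ was already seen to be special Harnack in Example~\ref{ex:Harnack}. Either way, the crucial ingredient — positivity of $\nu$ — enters precisely at the sign-count, since it is what ensures that the intersection with $T_{1,1}$ is tangential at each point where it occurs.
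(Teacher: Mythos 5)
First, a remark on the comparison you were asked to survive: the paper does not prove this theorem at all. It is stated with a pointer to \cite{K-O,KOS}, and only its consequences (Corollary~\ref{cor:max} and Theorem~\ref{thm:free}) are developed. So your attempt has to stand on its own. Its first two steps are correct and do coincide with the opening of the argument in \cite{KOS}: the reduction via Theorem~\ref{thm:Mik} to the bound $\#\bigl(A\cap T_{r_1,r_2}\bigr)\le 2$, and the observation that replacing $\nu(e)$ by $\nu(e)\,r_1^{\gamma_x\cdot e}r_2^{\gamma_y\cdot e}$ produces a positive, non-degenerate weight system whose characteristic polynomial is $P(r_1z,r_2w)$, so that everything reduces to counting zeros on the unit torus.

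The gap is exactly where you locate it, and unfortunately it is the whole theorem: you never prove that $P$ has at most two zeros on $S^1\times S^1$, and neither mechanism you gesture at would close the gap. A winding computation of $P$ along the two generating loops of the unit torus is controlled by the Newton polygon (by the argument principle it counts zeros of the one-variable slices $P(z_0,\cdot)$ inside the unit disc); there is no degree theory converting such windings into a count of points of the complex curve $\{P=0\}$ lying on the real two-torus, so this step fails rather than merely lacking detail. Likewise, the sign pattern of the four real values $P(\pm1,\pm1)$, which does follow from Equation~(\ref{eqn:Z}) together with $|P(\pm1,\pm1)|\le Z$, constrains $P$ at four points of the torus and says nothing a priori about the rest of it. The argument of \cite{KOS} uses positivity of $\nu$ in a different and essential way: by Equation~(\ref{eqn:Pf}), $P(z,w)$ on the unit torus is a sum of the nonnegative quantities $\sum_{[D\Delta D_0]=\alpha}\nu(D)$ weighted by unimodular factors $(-1)^{q(\alpha)}z^{h_x}w^{h_y}$, and the quadratic nature of $q$ in the homology class $(h_x,h_y)$ allows one, after multiplying $P(z^2,w^2)$ by an explicit monomial, to confine the argument of the result to a half-plane except at possibly two conjugate points. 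Your closing claim that positivity ``ensures the intersection with $T_{1,1}$ is tangential'' is precisely this conclusion, asserted rather than derived. Finally, the alternative route via urban renewal is a genuinely different program (essentially the one Kenyon--Okounkov and Goncharov--Kenyon use in related work), but as sketched it is also missing its main ingredients: a reduction of an arbitrary non-degenerate bipartite toric graph to minimal ones by such moves, a proof that the moves act on spectral curves by transformations preserving Harnackness, and the verification of all minimal base cases, not just the hexagonal one.
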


By Theorem~\ref{thm:Mik}, we immediately get the following corollary.

\begin{corollary}
\label{cor:max}
For any  non-degenerate weighted bipartite graph $(\G,\nu)\subset\mathbb{T}^2$, the associated complex algebraic curve meets the unit torus
\[
S^1\times S^1=\{(z,w)\in(\C^*)^2\,;\,|z|=|w|=1\}
\]
at most twice.\qed
\end{corollary}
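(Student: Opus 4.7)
The plan is straightforward: this corollary is stated as immediate, and it really is, so my ``proof'' is simply to chain the two preceding theorems.

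First I would note that the hypothesis on $(\G,\nu)\subset\mathbb{T}^2$ is non-degeneracy, meaning the Newton polygon of the characteristic polynomial $P(z,w)=\det M(z,w)$ has positive area. This is exactly the running hypothesis needed for both the Kenyon-Okounkov-Sheffield theorem and Mikhalkin's Theorem~\ref{thm:Mik}, so there is no hidden issue with applicability. In particular, both theorems take as input a Laurent polynomial in $\R[z^{\pm 1},w^{\pm 1}]$ with Newton polygon of positive area, and $P(z,w)$ satisfies exactly this.

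Next I would apply the Kenyon-Okounkov-Sheffield theorem to $P(z,w)$: it guarantees that the real plane curve $\R A=\{(z,w)\in(\R^*)^2\,;\,P(z,w)=0\}$ is a (possibly singular) special Harnack curve in the associated toric surface $\R T_\Delta$. Then I would invoke the direction (1)$\Rightarrow$(2) of Mikhalkin's Theorem~\ref{thm:Mik}: since $\R A$ is a special Harnack curve, the complex curve $A\subset(\C^*)^2$ meets every torus $\{(z,w)\in(\C^*)^2\,;\,|z|=r_1,\,|w|=r_2\}$ in at most two points. Specializing to $r_1=r_2=1$ yields the stated bound on $|A\cap(S^1\times S^1)|$.

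The ``main obstacle'' is essentially non-existent at this level: everything hard has been absorbed into the two black-boxed theorems above. The only small care needed is to check that the ``possibly singular'' clause in the KOS conclusion is compatible with the hypothesis of Mikhalkin's theorem, which it is, since Mikhalkin's statement is phrased in exactly that generality. So the corollary is really just a specialization of Mikhalkin's characterization to the distinguished torus $\{|z|=|w|=1\}$.
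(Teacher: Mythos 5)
Your proposal is correct and follows exactly the paper's route: the corollary is obtained by applying the Kenyon--Okounkov--Sheffield theorem to conclude that $\R A$ is a special Harnack curve, then invoking the implication $(1)\Rightarrow(2)$ of Theorem~\ref{thm:Mik} and specializing to the torus $r_1=r_2=1$. Nothing further is needed.
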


This seemingly technical result is extremely deep, and actually provides the necessary information for the rigorous study of large scale properties of dimer models on biperiodic bipartite graphs.
The paper~\cite{KOS} contains several applications of this result; we shall only explain the simplest of these, namely the computation of the {\em free energy\/}, that we now define.

For a graph $\G$ embedded in the torus, there is a very natural way to make the graph finer and finer: simply consider a fundamental domain for $\G\subset\mathbb{T}^2$ and
paste together $n\times n$ copies of these fundamental domains before wrapping up the torus again. (Topologically, this corresponds to considering a covering map of degree $n^2$.)
Let us denote by $\G_{n}\subset\mathbb{T}^2$ the corresponding graph. Then, the {\em free energy (per fundamental domain)\/} of the model is defined by
\[
\log Z:=\lim_{n\to \infty}\frac{1}{n^2}\log Z(\G_{n};\nu).
\]
The existence of this limit follows from standard subadditivity arguments.
Note that in particular, $\log Z$ evaluated at $\nu=1$ measures the exponential growth of the number of dimer configurations on $\G_{n}$.

\begin{theorem}[Kenyon-Okounkov-Sheffield~\cite{KOS}]\label{thm:free}
For any weighted bipartite graph $\G$ on the torus, the free energy is given by
\[
\log Z=\frac{1}{(2\pi i)^2}\int_{S^1\times S^1}\log|P(z,w)|\frac{dz}{z}\frac{dw}{w}.
\]
\end{theorem}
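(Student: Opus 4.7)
The plan is to apply the Pfaffian formula of Theorem~\ref{thm:Pf} (in its bipartite avatar (\ref{eqn:Z})) to the finer and finer graphs $\G_n$, to exploit the fact that $\G_n$ is a $\Z_n\times\Z_n$ Galois cover of $\G$ in order to factor the four characteristic polynomial evaluations of $\G_n$ as products of evaluations of $P$ at roots of unity, and finally to recognize $\frac{1}{n^2}\log|P_n(\pm 1,\pm 1)|$ as a Riemann sum converging to the stated integral.

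First, since $\G_n\subset\mathbb{T}^2$ is bipartite and toric, equation (\ref{eqn:Z}) gives, for a suitable Kasteleyn orientation,
\[
Z(\G_n;\nu)=\tfrac{1}{2}\bigl(-P_n(1,1)+P_n(-1,1)+P_n(1,-1)+P_n(-1,-1)\bigr),
\]
where $P_n$ is the characteristic polynomial of $\G_n$. The deck group of $\G_n\to\G$ is $\Z_n\times\Z_n$, and applying the discrete Fourier transform along both cyclic directions block-diagonalizes the twisted Kasteleyn matrix of $\G_n$: its blocks are precisely the matrices $M(\xi,\eta)$ of $\G$ evaluated at $(\xi,\eta)$ with $\xi^n=\eps_1$ and $\eta^n=\eps_2$. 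Taking determinants yields
\[
P_n(\eps_1,\eps_2)=\prod_{\xi^n=\eps_1,\;\eta^n=\eps_2}P(\xi,\eta)\qquad\text{for }(\eps_1,\eps_2)\in\{\pm 1\}^2.
\]

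Taking $\log|\cdot|$ and dividing by $n^2$ turns each $\frac{1}{n^2}\log|P_n(\eps_1,\eps_2)|$ into a Riemann sum, over a shifted grid of $n$th roots of unity in $S^1\times S^1$, for the integral
\[
I:=\frac{1}{(2\pi i)^2}\int_{S^1\times S^1}\log|P(z,w)|\,\frac{dz}{z}\frac{dw}{w}.
\]
Convergence of these Riemann sums is not automatic because $\log|P|$ blows up at the zeros of $P$ on the unit torus. This is exactly where Corollary~\ref{cor:max} enters: the non-degeneracy assumption together with the Harnack property of $\R A$ guarantees at most two zeros of $P$ on $S^1\times S^1$, so $\log|P|$ has only isolated logarithmic singularities and is absolutely integrable on $S^1\times S^1$. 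A standard argument (perturbing the grid slightly if a root of unity happens to fall on a zero of $P$) then gives $\frac{1}{n^2}\log|P_n(\eps_1,\eps_2)|\to I$ for each of the four sign patterns.

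The last and most delicate step is to pass from the four quantities $|P_n(\eps_1,\eps_2)|$ to $Z(\G_n;\nu)$ itself. The upper bound $\limsup\frac{1}{n^2}\log Z(\G_n;\nu)\le I$ is immediate from the triangle inequality applied to the displayed formula for $Z(\G_n;\nu)$. The matching lower bound is the main obstacle: a priori the three positive terms could nearly cancel the negative one, reducing the leading order. I expect the resolution to come again from the special Harnack property — it forces a definite sign pattern on the $P_n(\pm 1,\pm 1)$ (essentially one of them is dominated in absolute value by another of definite sign) so that no leading-order cancellation can occur, whence $Z(\G_n;\nu)\asymp e^{n^2I}$ and $\log Z=I$.
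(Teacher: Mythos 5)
Your block-diagonalization of the twisted Kasteleyn matrix along the deck group $\Z_n\times\Z_n$, the resulting factorization $P_n(\eps_1,\eps_2)=\prod_{\xi^n=\eps_1,\,\eta^n=\eps_2}P(\xi,\eta)$, and the identification of $\frac{1}{n^2}\log|P_n(\pm1,\pm1)|$ as Riemann sums reproduce the middle step of the paper's proof correctly. The gap is in the step you single out as ``the main obstacle'': the lower bound on $Z(\G_n;\nu)$ is left as a conjecture (``I expect the resolution to come again from the special Harnack property\dots''), and the mechanism you guess at is not the one that works. No sign analysis of the four terms is needed. Each $P_n((-1)^\theta,(-1)^\tau)$ is, up to sign, the Pfaffian of a Kasteleyn matrix for $\G_n$, and by Equation~(\ref{equ:Pf}) such a Pfaffian is a \emph{signed} sum of the positive weights $\nu(D)$; hence $|P_n((-1)^\theta,(-1)^\tau)|\le Z(\G_n;\nu)$ for every sign pattern. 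Combined with the triangle inequality applied to~(\ref{eqn:Z}), this sandwiches $\max_{\theta,\tau}|P_n((-1)^\theta,(-1)^\tau)|\le Z(\G_n;\nu)\le 2\max_{\theta,\tau}|P_n((-1)^\theta,(-1)^\tau)|$, so the feared leading-order cancellation simply cannot occur and $\log Z=\lim_n\frac{1}{n^2}\log\max_{\theta,\tau}|P_n((-1)^\theta,(-1)^\tau)|$. This elementary inequality is the paper's Step~1; the Harnack property plays no role in it.

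A second, smaller error: your claim that all four Riemann sums converge to the integral is false in general, and ``perturbing the grid slightly if a root of unity happens to fall on a zero of $P$'' is not an available move --- the grid is the set of $n$-th roots of $\pm1$, dictated by the four Kasteleyn orientations, and if a zero of $P$ on $S^1\times S^1$ is a grid point then $P_n(\eps_1,\eps_2)$ vanishes identically for those $n$ and that Riemann sum equals $-\infty$. (For the hexagonal lattice with $a=b=c$, the zero $(e^{2\pi i/3},e^{-2\pi i/3})$ lies on the $(+1,+1)$ grid for every $n$ divisible by $3$.) Corollary~\ref{cor:max} is used in the opposite direction: since $P$ has at most two (conjugate) zeros on the unit torus and the four grids are staggered, \emph{at least three} of the four Riemann sums converge, and by the sandwich above a single convergent one suffices. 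With these two repairs your outline coincides with the paper's argument.
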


\begin{proof}[Sketch of proof.]
With Equation~(\ref{eqn:Z}) and Corollary~\ref{cor:max} in hand, the demonstration of Theorem~\ref{thm:free} is quite straightforward. It can be divided into three steps.

\noindent{\em Step 1.} Let us write $P_{n}$ for the characteristic polynomial of $\G_{n}$ and use the notation $Z_{n}^{\theta\tau}:=P_{n}((-1)^\theta,(-1)^\tau)$. We have
\[
|Z_{n}^{\theta\tau}|\overset{(\ref{equ:Pf})}{\le} Z(\G_{n};\nu)\overset{(\ref{eqn:Z})}{=}\frac{1}{2}\left(-Z_{n}^{00}+Z_{n}^{10}+Z_{n}^{01}+Z_{n}^{11}\right)\le2\max_{(\theta,\tau)}|Z_{n}^{\theta\tau}|
\]
for all $\theta,\tau\in\{0,1\}$. It follows that
\[
\lim_{n\to \infty}\frac{1}{n^2}\log Z(\G_{n};\nu)=\lim_{n\to \infty}\frac{1}{n^2}\log\max_{(\theta,\tau)}|Z_{n}^{\theta\tau}|,
\]
that is to say,
\[
\log Z=\lim_{n\to \infty}\frac{1}{n^2}\log\max_{(\theta,\tau)}|P_{n}((-1)^\theta,(-1)^\tau)|.
\]
\noindent {\em Step 2.} Using the symmetry of the twisted Kasteleyn matrix for $\G_{n}$, it can be block-diagonalized into $n^2$ copies of the twisted Kasteleyn matrix for $\G=\G_{1}$. Computing
the determinants leads to the following formula:
\[
P_{n}(z,w)=\prod_{u^n=z}\prod_{v^n=w}P(u,v).
\]
In particular, for $\theta,\tau\in\{0,1\}$, we have
\[
\frac{1}{n^2}\log|P_{n}((-1)^\theta,(-1)^\tau)|=\frac{1}{n^2}\sum_{u^n=(-1)^\theta}\sum_{v^n=(-1)^\tau}\log|P(u,v)|,
\]
which is nothing but a Riemann sum for the integral
\[
\int_0^1\!\!\!\int_0^1\log|P(e^{2\pi i\varphi},e^{2\pi i\psi})|d\varphi d\psi=\frac{1}{(2\pi i)^2}\int_{S^1\times S^1}\log|P(z,w)|\frac{dz}{z}\frac{dw}{w}.
\]
\noindent {\em Step 3.} By step 1, we only need to check that this sum converges for at least one choice of $(\theta,\tau)$. By Corollary~\ref{cor:max}, $P(z,w)$ has at most two (conjugate) zeroes on the unit
torus $S^1\times S^1$. Since the four Riemann sums above are on four staggered lattices, at least three of them converge.
\end{proof}

We conclude these notes with two examples.

\begin{example}
By Example~\ref{ex:hexa} and Theorem~\ref{thm:free}, the free energy for the dimer model on the hexagonal lattice (with fundamental domain as in Figure~\ref{fig:hexa}) is given by
\[
\log Z=\frac{1}{(2\pi)^2}\int_0^{2\pi}\!\!\!\int_0^{2\pi}\log\left|a+be^{i\varphi}+ce^{i\psi}\right|d\varphi d\psi.
\]
\end{example}

\begin{example}
Consider the weighted toric bipartite graph illustrated in Figure~\ref{fig:square2}, which is nothing but the simplest fundamental domain for the {\em bipartite\/} square lattice on the torus.
One easily computes its characteristic polynomial
\[
P(z,w)=y^2(2+(z+z^{-1}))+x^2(2+(w+w^{-1})).
\]
By Theorem~\ref{thm:free}, the corresponding free energy is given by
\begin{align*}
\log Z&=\frac{1}{(2\pi)^2}\int_0^{2\pi}\!\!\!\int_0^{2\pi}\log(2(x^2+y^2+x^2\cos\varphi+y^2\cos\psi))d\varphi d\psi\\
	&=\frac{1}{\pi^2}\int_0^{\pi}\!\!\!\int_0^{\pi}\log(4(x^2\cos^2\varphi+y^2\cos^2\psi))d\varphi d\psi.
\end{align*}
This is four times the value obtained by Kasteleyn  (recall Example~\ref{ex:square}), as it should, the fundamental domain for the bipartite square lattice being of size two by two. 
\end{example}

\begin{figure}[Htb]
\labellist\small\hair 2.5pt
\pinlabel {$x$} at 120 80
\pinlabel {$y$} at 80 120
\pinlabel {$\gamma_x$} at 255 7
\pinlabel {$\gamma_y$} at -20 230
\endlabellist
\centerline{\psfig{figure=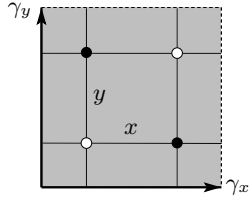,height=2.5cm}}
\caption{A fundamental domain for the bipartite square lattice wrapped up on the torus.}
\label{fig:square2}
\end{figure}

\bibliographystyle{plain}

\bibliography{dimers}

\end{document}